\newtheorem{definition}{Definition}
\newtheorem{lemma}{Lemma}
\newtheorem{observation}{Observation}
\newtheorem{theorem}{Theorem}
\title{Cake Cutting on Graphs: A Discrete and Bounded Proportional Protocol}
\author{
	Xiaohui Bei \thanks{School of Physical and Mathematical Sciences, Nanyang Technological University. xhbei@ntu.edu.sg.}
	\and
	Xiaoming Sun \thanks{CAS Key Lab of Network Data Science and Technology, Institute of Computing Technology, Chinese Academy of Sciences. sunxiaoming@ict.ac.cn.}
	\and
	Hao Wu \thanks{University of Chinese Academy of Sciences. wuhao164@mails.ucas.ac.cn.}
    \and
    Jialin Zhang \thanks{CAS Key Lab of Network Data Science and Technology, Institute of Computing Technology, Chinese Academy of Sciences. zhangjialin@ict.ac.cn.}
    \and
    Zhijie Zhang \thanks{CAS Key Lab of Network Data Science and Technology, Institute of Computing Technology, Chinese Academy of Sciences. zhangzhijie@ict.ac.cn.}
    \and
    Wei Zi \thanks{University of Chinese Academy of Sciences. ziwei16@mails.ucas.ac.cn.}
}
\date{\vspace{-3ex}}
\begin{document}

	\maketitle

	\begin{abstract}
		The classical cake cutting problem studies how to find fair allocations of a heterogeneous and divisible resource among multiple agents.
		Two of the most commonly studied fairness concepts in cake cutting are \emph{proportionality} and \emph{envy-freeness}.
		It is well known that a proportional allocation among $n$ agents can be found efficiently via simple protocols~\cite{EvenP84-Note}.
		For envy-freeness, in a recent breakthrough, Aziz and Mackenzie~\cite{AzizM16-Discrete-Any} proposed a discrete and bounded envy-free protocol for any number of players. However, the protocol suffers from high multiple-exponential query complexity and it remains open to find simpler and more efficient envy-free protocols.

		In this paper we consider a variation of the cake cutting problem by assuming an underlying graph over the agents whose edges describe their acquaintance relationships, and agents evaluate their shares relatively to those of their neighbors.
		An allocation is called \emph{locally proportional} if each agent thinks she receives at least the average value over her neighbors.
		Local proportionality generalizes proportionality and is in an interesting middle ground between proportionality and envy-freeness: its existence is guaranteed by that of an envy-free allocation, but no simple protocol is known to produce such a locally proportional allocation for general graphs.
		Previous works showed locally proportional protocols for special classes of graphs, and it is listed in both~\cite{AbebeKP17-Fair} and~\cite{BeiQZ17-Networked} as an open question to design simple locally proportional protocols for more general classes of graphs.
		In this paper we completely resolved this open question by presenting a discrete and bounded locally proportional protocol for any given graph.
		Our protocol has a query complexity of only single exponential, which is significantly smaller than the six towers of $n$ query complexity of the envy-free protocol given in~\cite{AzizM16-Discrete-Any}.
	\end{abstract}

    \begin{keywords}
    	 { }cake cutting, proportionality, envy-freeness, local fairness, discrete and bounded protocol
    \end{keywords}

    \newpage

	\section{Introduction}
	\label{Section: introduction}
	
	The problem of fair division studies how to allocate a set of scarce resources to a set of interested agents in a fair manner.
	When the resource is heterogeneous and divisible, the problem is known as ``cake cutting'' and has a long and intriguing history in multiple disciplines such as economics, social science, political science, and computer science~\cite{BramsT-Fair, RobertsonW98-Cake,Procaccia16-Cake}.
	In the standard model, a single heterogeneous resource, also known as a cake, is represented by the interval $[0,1]$. Each agent has a valuation function which defines her preference over different parts of the resource. The goal is to distribute the resource to the agents using standard queries~\cite{RobertsonW98-Cake}, such that everyone feels she is treated ``fairly''.
	Two of the most prominent concepts to measure fairness in cake cutting are \emph{proportionality} and \emph{envy-freeness}. Informally, proportionality means that each agent, in her own view, gets at least an average share of the cake; and envy-freeness means that each agent weakly prefers her own piece to any other agent's. It is not hard to see that envy-freeness implies proportionality but not vice versa.
	
	It is well known that a proportional allocation among $n$ agents can be efficiently found using $O(n\log{n})$ queries~\cite{EvenP84-Note}. In the meanwhile, since the 1940s, the envy-free cake cutting  problem has baffled the great minds from multiple disciplines.
	It was not until 2016 that discrete and bounded envy-free protocols for four and more players were finally proposed by Aziz and Mackenzie~\cite{AzizM16-Discrete-Four,AzizM16-Discrete-Any}. Despite these groundbreaking advances, the main drawback of the new protocols is their high query complexity in the form of six towers of $n$. The unrealistic number of evaluations and cuts required by the protocols prevents them from being put into practical use.
	
	A common assumption made by all previous works is that each agent is aware of which parts every other agent gets, and is comparing her own piece to everyone else's. This is not necessarily the case in many contexts where each agent's knowledge and focus are \emph{local}.
	In light of this, two independent works~\cite{AbebeKP17-Fair,BeiQZ17-Networked} considered a notion of \emph{local fairness}
	\footnote{This notion is termed \emph{local fairness} in~\cite{AbebeKP17-Fair} and \emph{networked fairness} in~\cite{BeiQZ17-Networked}. We adopt the former terminology in this paper. For better clarification, we will sometimes also call the original definitions of these two fairness notions \emph{global proportionality} and \emph{global envy-freeness}.}.
	Their model assumes an underlying social network over the agents, and defines proportionality and envy-freeness locally in relation to their neighbors: given an underlying graph of the agents, an allocation is \emph{locally envy-free} if no agents envies any of her neighbor's share, and is \emph{locally proportional} if every agent  values her own share no less than the average among her neighbors.
	Both local proportionality and local envy-freeness are weaker than the global envy-freeness. Yet from a practical point of view they capture many situations in which global knowledge is unavailable or unrealistic.
	It is also hopeful that by considering only local comparisons, there could exist simpler and more intuitive protocols that produce locally fair allocations.
	
	For local envy-freeness, a nontrivial continuous protocol for trees was proposed in~\cite{BeiQZ17-Networked} when the protocol is allowed to perform the so-called Austin Cut Procedure~\cite{Austin82-Sharing}.
	However, this protocol is hard to generalize. What's worse, it seems a far-fetched task to have a simple but general locally envy-free protocol for any given graph. If we have such a protocol, then by applying the result to the complete graph, it would imply a simple globally envy-free protocol. From previous works and attempts, this seems to be a very difficult task.
	
	On the other hand, local proportionality poses an intriguing case. First of all, a simple globally proportional protocol can be easily found~\cite{EvenP84-Note}. However, unlike local envy-freeness, local proportionality is not a monotone property. That is, a globally proportional allocation is not necessarily locally proportional for a particular graph.
	In the meanwhile, for any graph, the existence of a locally proportional allocation is guaranteed from that of a globally envy-free allocation. In~\cite{BeiQZ17-Networked}, the authors gave a simple locally proportional protocol for a special class of graphs known as the descendant graph. It is listed both in~\cite{AbebeKP17-Fair} and~\cite{BeiQZ17-Networked} as an open question to design \emph{simple} and locally proportional protocols for more classes of graphs.

	\paragraph{Our Contribution.} In this paper, we completely resolve the aforementioned open question by presenting a protocol that is able to produce a locally proportional allocation for \emph{any} given graph.
	The protocol only has a \emph{single exponential} query complexity, which is a significant improvement over the six towers of $n$ query complexity of the globally envy-free protocol from~\cite{AzizM16-Discrete-Any}. 
	It is a \emph{discrete} protocol in that it only requires standard queries~\cite{RobertsonW98-Cake} and does not require any continuous operations such as the moving-knife procedure.
	It also shows that proportionality can indeed be algorithmically generalized to incorporate graphical structures in a simple and efficient way.
	
	Similar to most envy-free cake cutting protocols \cite{AmanatidisC18-Improved, AzizM16-Discrete-Any, AzizM16-Discrete-Four, BramsT95-Envy-Free}, our protocol maintains a locally proportional partial allocation and an unallocated \emph{residue} of the cake.
	It repeatedly updates both parts throughout its execution, until the unallocated residue is allocated \emph{completely}.
	We rely on the concept of \emph{dominant agents} to help eliminate the residue.
	Intuitively speaking, an agent is \emph{dominant} if she remains proportional even if the residue were to be allocated among her neighbors.
	This, as a result, allows us to allocate the residue only among non-dominant agents.
	Surprisingly, in doing so, the remaining non-dominant agents can be made dominant as well in an efficient way,
	based on the observation that the neighbors of a dominant agent are able to collect enough \emph{bonuses} over her and finally \emph{dominate} her by excluding her from the remaining allocation.
	
	Despite its useful corollaries, the argument above does not tell us how the first dominant agent can be created.
	Indeed, this issue is the most difficult part of the problem.
	We manage to resolve it by adopting the idea of exchanging pieces allocated to agents.
	In doing so, we hope that for a specific agent, the pieces of insignificant values in her own view will be transferred to her neighbors' hands.
	However, in general, exchanging pieces may incur large losses in agents' shares.
	As a result, the new allocation after exchange may not maintain the desired fairness any more.
	Previously, Aziz and Mackenzie \cite{AzizM16-Discrete-Any} managed to overcome this difficulty by reserving an enough fraction of the residue and allocating it carefully to cover any possible losses.
	Their method is extremely complicated, resulting in the towers of $n$ query complexity.
	In contrast, we always split the allocation into two parts, and restrict the exchange to take place within one of the two parts, such that the losses incurred are covered by the bonuses reserved in the other part.
	We note that our method is possible since we merely need to achieve the weaker proportionality constraint.

	\paragraph{Related Work.}
	It is well known that a globally proportional allocation among $n$ agents can be efficiently found using $O(n \log n)$ queries \cite{EvenP84-Note}.
	On the other hand, after a series of attempts \cite{Magdon-IsmailBK03-Cake, WoegingerS07-Complexity}, a matching lower bound was finally proved in \cite{EdmondsP06-Cake}.
	In contrast, although a globally envy-free allocation is guaranteed to exist \cite{Stromquist80-How, Su99-Rental}, \emph{finding} such an allocation seems to be extremely hard.
	In 1995, Brams and Taylor proposed a \emph{discrete but unbounded} protocol \cite{BramsT95-Envy-Free}.
	However, their protocol suffers from a significant drawback that the number of queries can be made arbitrarily large by choosing certain agents' valuation functions.
	Later, by resorting to the \emph{moving-knife} procedure, several \emph{continuous} protocols were proposed for four agents \cite{BramsT97-Moving-Knife, BarbanelB04-Cake} and five agents \cite{SaberiW09-Cutting}.
	It was not until 2016 that discrete and bounded protocols for four and more agents were finally proposed by Aziz and Mackenzie \cite{AzizM16-Discrete-Four, AzizM16-Discrete-Any}.
	The only known nontrivial lower bound is $\Omega(n^2)$ \cite{Procaccia09-Thou}.

	The notion of local fairness with respect to social networks was first introduced to the cake cutting problem independently in~\cite{AbebeKP17-Fair} and~\cite{BeiQZ17-Networked}.
	Previously known algorithmic results are mainly from \cite{BeiQZ17-Networked}, which gave a locally envy-free continuous protocol for trees as well as a locally proportional protocol for a special class of graphs known as the descendant graph.

	The idea of restricted comparison has also been considered in the allocation of indivisible resources.
	For instance, both Chevaleyre et al.~\cite{ChevaleyreEM07} and Brederect et al.~\cite{Bredereck0N18} considered the \emph{graph envy-freeness} where envy can only arise between adjacent agents.
	Todo et al.~\cite{TodoLHMIY11} studied envy-freeness between two groups of agents.
	Aziz et al.~\cite{AzizBCGL18} proposed a family of epistemic notions of fairness based on a social graph, establishing a hierarchy among various notions of fairness.
    Chen and Shah~\cite{chen2017ignorance} defined the so-called \emph{Bayesian envy-freeness} under a Bayesian scenario.

	\paragraph{Structure.}
	In Section \ref{Section: preliminaries}, we give a formal description of the problem and introduce some central concepts as well as the Core Protocol in \cite{AzizM16-Discrete-Any} used in our protocol.
	In section \ref{Section: the proportional protocol}, we present a locally proportional protocol for any given graph as well as its analysis.
    As a supplement, in section \ref{Secton: toward envy-freeness on trees}, we discretize the continuous protocol in \cite{BeiQZ17-Networked} to obtain a locally envy-free \emph{partial} allocation on trees.
	We conclude our paper by proposing several open problems in Section \ref{Section: conclusion}.

	\section{Preliminaries}
	\label{Section: preliminaries}

	A cake is to be allocated among a set of $n$ agents on an \emph{undirected} social graph $G = (V, E)$, with each vertex $v \in V$ identified with an agent.
	The cake is represented by the interval $[0, 1]$.
	A \emph{piece} of cake refers to a finite union of disjoint subintervals of $[0, 1]$.
	The collection $\{A_v\}_{v\in V}$ of disjoint pieces is called an \emph{allocation} if piece $A_v$ is allocated to agent $v$.
	An allocation is \emph{partial} if the union of the allocated pieces is not the whole cake; otherwise it is \emph{complete}.
	The unallocated part with respect to a partial allocation is called a \emph{residue}, often denoted by $R$.
	In a standard setting, the cake is required to be allocated \emph{completely}.

	Each agent $v \in V$ has a valuation function $f_v$ over different pieces of the cake, which is defined on subintervals of $[0, 1]$ and assumes the following properties.
	\begin{itemize}
		\item \emph{Normalized} and \emph{nonnegative}. $f_v([0, 1]) = 1$ and $f_v(I) \geq 0$ for interval $I \subseteq [0, 1]$.
		\item \emph{Additive}. $f_v(I \cup I') = f_v(I) + f_v(I')$ for disjoint intervals $I, I'$.
		\item \emph{Divisible}. For any interval $I$ and $0 \leq \lambda \leq 1$, there exists an interval $I' \subseteq I$ such that $f_v(I') = \lambda \cdot f_v(I)$.
	\end{itemize}
    We adopt the standard Robertson-Webb query model \cite{RobertsonW98-Cake} to access agents' valuations, in which two kinds of queries are allowed.

    \begin{itemize}
    	\item \emph{Evaluation queries.} Given an interval $[x, y]$ and an agent $v$, the query returns $f_v([x, y])$.
    	\item \emph{Cut queries.} Given a point $x \in [0, 1]$, a value $\alpha$ and an agent $v$, the query returns a point $y$ such that $f_v([x, y]) = \alpha$.
    \end{itemize}
    The complexity of a cake cutting protocol is measured by the total number of queries it uses.

	Let $N(v)$ denote the set of agent $v$'s neighbors and $d_v = |N(v)|$ denote its degree.
	Below are two concepts used to measure the fairness of an allocation on a graph.
	By their definitions, we may assume w.l.o.g~that the graph is connected, since otherwise we can allocate the cake within a particular component of the graph.

	\begin{definition}[Local proportionality]
		\label{Definition: proporitonality}
		An allocation $\{A_v\}_{v\in V}$ is locally proportional on a graph $G = (V, E)$ if for each $v \in V$, $f_v(A_v) \geq \frac{1}{d_v} \sum_{u \in N(v)} f_v(A_u)$.
	\end{definition}

    \begin{definition}[Local envy-freeness]
    	\label{Definition: envy-freeness}
    	An allocation $\{A_v\}_{v\in V}$ is locally envy-free on a graph $G = (V, E)$ if for each $v \in V$ and $u \in N(v)$, $f_v(A_v) \geq f_v(A_u)$.
    \end{definition}

	\subsection{Central Concepts Used in Our Protocol}
	\label{Subsection: central concepts used in our protocol}

	Similar to most cake cutting protocols \cite{AmanatidisC18-Improved, AzizM16-Discrete-Any, AzizM16-Discrete-Four, BramsT95-Envy-Free}, our protocol maintains a locally proportional partial allocation and a residue of the cake.
	It repeatedly invokes the Core Protocol in \cite{AzizM16-Discrete-Any} to diminish the residue throughout its execution, until the residue is allocated \emph{completely}.
	We introduce several concepts below to further identify the quality of an allocation.
	They play central roles in our protocol for eliminating the unallocated residue.

	\begin{definition}[Dominant agents]
		\label{Definition: dominant agents}
		Given an allocation $\{A_v\}_{v\in V}$ and the residue $R$, an agent $v$ is dominant if $f_v(A_v) \geq \frac{1}{d_v}(\sum_{u\in N(v)}f_v(A_u) + f_v(R))$.
	\end{definition}

    Intuitively, a dominant agent will remain proportional even if the residue were to be allocated among her neighbors.
    As a result, a dominant agent can be safely excluded from the allocation of the residue, reducing the problem to a smaller scale.
    In essence, our protocol is all about how agents can be made dominant.
    We rely on the following concept and observation to create a dominant agent.

    \begin{definition}[Dominance between a pair of agents]
    	\label{Definition: dominance between a pair of agents}
    	Given an allocation $\{A_v\}_{v\in V}$ and the residue $R$, we say agent $v$ dominates agent $u$ if $f_v(A_v) \geq f_v(A_u) + f_v(R)$.
    \end{definition}

	\begin{observation}
		\label{Observation: dominance}
		Given an allocation $\{A_v\}_{v \in V}$ and the residue $R$, if agent $v$ is envy-free and additionally she dominates at least one of her neighbors, then she is dominant.
	\end{observation}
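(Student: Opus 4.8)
The plan is to unfold the definition of a dominant agent and then verify it by summing the two hypotheses in a carefully weighted way. Recall that by Definition~\ref{Definition: dominant agents}, clearing the denominator, agent $v$ is dominant precisely when $d_v \cdot f_v(A_v) \geq \sum_{u \in N(v)} f_v(A_u) + f_v(R)$. So it suffices to produce exactly $d_v$ copies of $f_v(A_v)$ on the left-hand side and match them against the $d_v$ neighbor terms together with the single residue term $f_v(R)$.

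First I would fix a neighbor $w \in N(v)$ that $v$ dominates, whose existence is guaranteed by hypothesis; by Definition~\ref{Definition: dominance between a pair of agents} this yields $f_v(A_v) \geq f_v(A_w) + f_v(R)$. The key observation is that this single inequality already carries the \emph{entire} residue term $f_v(R)$ as slack, so it simultaneously accounts for both $w$'s share and the residue. For each of the remaining $d_v - 1$ neighbors $u \in N(v) \setminus \{w\}$, I would instead invoke the envy-freeness of $v$ (Definition~\ref{Definition: envy-freeness}) to obtain $f_v(A_v) \geq f_v(A_u)$.

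Summing the one dominance inequality together with these $d_v - 1$ envy-freeness inequalities then gives $d_v \cdot f_v(A_v) \geq \left( f_v(A_w) + f_v(R) \right) + \sum_{u \in N(v) \setminus \{w\}} f_v(A_u) = \sum_{u \in N(v)} f_v(A_u) + f_v(R)$, which is exactly the dominance condition after dividing through by $d_v$. Hence $v$ is dominant, as claimed.

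I expect no genuine obstacle here beyond careful bookkeeping: the only point requiring attention is to ensure the residue term $f_v(R)$ is absorbed exactly once, namely into the inequality for the dominated neighbor $w$, rather than being double-counted or dropped. The argument succeeds precisely because dominating a single neighbor supplies the extra $f_v(R)$ of slack that is needed, while envy-freeness supplies the remaining $d_v - 1$ pairwise bounds with no residue attached, so the two hypotheses fit together to cover the $d_v$ neighbor terms and the residue with no surplus or deficit.
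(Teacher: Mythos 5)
Your proof is correct and follows essentially the same route as the paper's: apply the dominance inequality to the one dominated neighbor $w$, apply envy-freeness to the remaining $d_v - 1$ neighbors, and sum the $d_v$ inequalities to obtain the dominance condition. The paper's version is just a terser statement of the identical argument.
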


    \begin{proof}
    	Assume that agent $v$ dominates her neighbor $w$, then $f_v(A_v) \geq f_v(A_w) + f_v(R)$.
    	Since agent $v$ is also envy-free, $f_v(A_v) \geq f_v(A_u)$ for $u \in N(v)$.
    	Hence $f_v(A_v) \geq \frac{1}{d_v}(\sum_{u\in N(v)}f_v(A_u) + f_v(R))$, i.e~agent $v$ is dominant.
    \end{proof}

    The next concept is heavily used in our protocol as well as its analysis.
    It also provides an alternative perspective of the concepts above, as discussed in the following paragraph.

    \begin{definition}[Bonus]
    	\label{Definition: bonus}
    	Given an allocation $\{A_v\}_{v \in V}$, the difference $f_v(A_v) - f_v(A_u)$ is called agent $v$'s bonus over agent $u$; the summation $\sum_{u \in N(v)} (f_v(A_v) - f_v(A_u))$ is called agent $v$'s bonus over her neighbors.
    \end{definition}

    It is easy to see that agent $v$ does not envy agent $u$ if her bonus over agent $u$ is nonnegative.
    Agent $v$ is proportional if her bonus over her neighbors is nonnegative.
    Besides, agent $v$ is dominant if her bonus over her neighbors is at least the value of the residue.
    Agent $v$ dominates agent $u$ if her bonus over agent $u$ is at least the value of the residue.

	\subsection{The Core Protocol}
	\label{Subsection: the core protocol}

	Our protocol repeatedly invokes the Core Protocol, which was proposed by Aziz and Mackenzie in their
	celebrated paper \cite{AzizM16-Discrete-Any} to deal with the canonical case where the underlying graph is a complete graph.
	We denote this protocol by Core$(r, N, R)$, where $R$ is the residue to be allocated among agents in $N \subseteq V$, and agent $r \in N$ serves as a \emph{cutter}, who will cut the residue $R$ into $m = |N|$ equally preferable pieces at the very beginning of the protocol.
	The protocol returns a (possibly partial) allocation $\{X_v\}_{v\in V}$ of $R$ and an updated residue $R' = R \backslash (\cup_{v \in V} X_v)$ with the following properties.

	\begin{itemize}
		\item
		Each agent $v \in N$ obtains (possibly a part of) exactly one out of the $m$ pieces cut by the cutter $r$.
		Besides, the cutter $r$ and one more agent $u \neq r$ each obtains a complete piece.
		\item
		The allocation $\{X_v\}_{v \in V}$ is globally envy-free.
		That is, for any $v,u\in N$, $f_v(X_v) \geq f_v(X_u)$, despite whether or not $v$ and $u$ are adjacent.
	\end{itemize}
    It was shown in \cite{AzizM16-Discrete-Any} that the Core Protocol is bounded by $n^{2n + 3}$ queries.
    In Subsection \ref{Subsection: the complexity of core protocol}, we give a tighter analysis and improve the bound to $O(c^n)$ queries for any constant $c > 4$.
    For completeness and readability, we also present the overall description of the Core Protocol in that subsection.

	\begin{definition}[Snapshot]
		\label{Definition: snapshot}
		The allocation $\{X_v\}_{v \in V}$ returned by an execution of the Core Protocol is called a snapshot.
	\end{definition}

    The Core Protocol reduces to the famous Cut and Choose Protocol when $|N| = 2$ and therefore will allocate the residue $R$ completely among agents in $N$.
    When $|N| \geq 3$, the updated residue $R'$ is possibly non-empty.
    Fortunately, as described below, it is guaranteed that in a snapshot the cutter $r$ has a large bonus over some agent $u \in N$, which makes it possible to make the cutter $r$ dominant.

    \begin{definition}[Insignificant piece]
    	\label{Definition: insignificant piece}
    	Given a snapshot $\{X_v\}_{v \in V}$ and the updated residue $R'$ returned by Core$(r, N, R)$ with $m = |N| \geq 3$, a piece $X_u$ is called an insignificant piece if $f_r(X_r) - f_r(X_u) \geq f_r(R') / (m - 2)$.
    	We say the cutter $r$ has a significant bonus over the agent who is allocated an insignificant piece in the snapshot.
    \end{definition}

    \begin{observation}
    	\label{Observation: existence of an insignificant piece}
    	Given a snapshot $\{X_v\}_{v \in V}$ and the updated residue $R'$ returned by Core$(r, N, R)$ with $m = |N| \geq 3$, there exists at least one insignificant piece.
    \end{observation}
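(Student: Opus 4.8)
The plan is to exploit the two defining features of the Core Protocol's output: the cutter $r$ slices $R$ into $m$ equally preferable pieces, and exactly two agents—the cutter $r$ and one other—receive complete pieces, while each of the remaining $m - 2$ agents receives only a part of a distinct piece. The residue $R'$ is then precisely the union of the unallocated portions of these $m - 2$ partially-taken pieces, and I would compute its value in the cutter's view as a sum of $r$'s bonuses over those $m - 2$ agents.

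First I would record that, because $r$ cuts $R$ into $m$ pieces of equal value to herself, every piece is worth exactly $f_r(R)/m$ to $r$; in particular the complete piece $X_r$ satisfies $f_r(X_r) = f_r(R)/m$. Next, since the two complete pieces are fully allocated, they contribute nothing to the residue, so $R'$ equals the union of the leftovers of the $m - 2$ pieces from which the other agents took only a portion. For each such agent $u$, additivity gives that the leftover of her piece has cutter-value $f_r(R)/m - f_r(X_u) = f_r(X_r) - f_r(X_u)$, which is exactly $r$'s bonus over $u$.

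Summing over the $m - 2$ partial-piece agents then yields $f_r(R') = \sum_u \left( f_r(X_r) - f_r(X_u) \right)$, where the sum ranges over those agents. Applying an averaging (pigeonhole) argument to this sum of $m - 2$ nonnegative quantities, at least one term must be at least the average $f_r(R')/(m - 2)$; the corresponding agent's piece is then insignificant by definition, which is what we want.

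I expect the only delicate point to be the bookkeeping in the second step: one must invoke the two-complete-pieces guarantee to argue that the residue draws solely from the $m - 2$ partially-taken pieces, so that its cutter-value telescopes exactly into the sum of bonuses with no extraneous terms. Once this accounting is pinned down, the existence of an insignificant piece follows immediately from averaging; the only valuation property needed beyond this is additivity, used to split each piece's value into the allocated part and its leftover.
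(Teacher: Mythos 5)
Your proposal is correct and takes essentially the same approach as the paper: both use the equal-cut and two-complete-pieces properties of the Core Protocol to account for the residue's cutter-value via the $m-2$ remaining pieces, then apply the same averaging argument. The only cosmetic difference is that the paper averages the values $f_r(X_u)$ directly while you average the bonuses $f_r(X_r) - f_r(X_u)$, which is the same computation up to an affine shift.
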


    \begin{proof}
    	By the properties of the Core Protocol, in the snapshot the cutter $r$ and some other agent each obtains a piece of value $f_r(R) / m$.
    	Hence the remaining $m - 2$ agents are allocated pieces of value $f_r(R) - \frac{2}{m}f_r(R) - f_r(R')$.
    	By an averaging argument, there exists an agent $u$ who obtains a piece $X_u$ such that
    	\[f_r(X_u) \leq \frac{f_r(R) - \frac{2}{m}f_r(R) - f_r(R')}{m - 2} = \frac{f_r(R)}{m} - \frac{f_r(R')}{m - 2} = f_r(X_r) - \frac{f_r(R')}{m - 2}.\]
    	Therefore, $X_u$ is an insignificant piece.
    \end{proof}

	It turns out that the cutter $r$ can finally dominate the agent $u$ who holds an insignificant piece by the following procedure: starting from $R'$, run the Core Protocol another $k \geq m \ln (m - 2) /2$ times on the continually updated residue with agent $r$ as the cutter.
	On the one hand, the cutter $r$'s bonus over agent $u$ in the final allocation is at least $f_r(R') / (m - 2)$, since the snapshots generated by the following executions of the Core Protocol is envy-free.
	On the other hand, since one execution of the Core Protocol reduces the residue to at most an $(m - 2) / m$ fraction in the cutter's valuation, the final residue is of value at most $(\frac{m-2}{m})^k f_r(R')$, which is smaller than $f_r(R') / (m - 2)$.
	As a result, the cutter $r$ dominates agent $u$ at the moment.

	\section{The Proportional Protocol}
	\label{Section: the proportional protocol}

	Our protocol runs in two stages.
	In the first stage, the protocol specifies an agent $r$ that is from the center\footnote{The center of a graph is the set of vertices whose eccentricities equal to the radius of the graph. Recall that the eccentricity of a vertex is the greatest distance between other vertices and it.} of $G$.
	Then the CreateDominance Protocol in Subsection \ref{Subsection: create the first dominant agent} will guarantee that agent $r$ is dominant.
	In the second stage, starting from agent $r$, we manage to make every other agent dominant by invoking the DiffuseDominance Protocol designed in Subsection \ref{Subsection: make every agent dominant}.
	The remaining residue will also be allocated in this stage, resulting in a proportional allocation of the whole cake.
	We summarize these two stages as the Main Protocol (Algorithm \ref{Algorithm : main protocol}).

	\begin{algorithm}
		\SetAlgoLined
		\KwIn{A graph $G = (V, E)$ with $|V| = n$, a cake $R = [0, 1]$.}
		\KwOut{A proportional allocation of the cake $R$ on graph $G$.}

		Pick an agent $r$ that lies in the center of $G$.

		CreateDominance$(G, R, r)$.

		DiffuseDominance$(G, R, r)$.

		\caption{Main Protocol}
		\label{Algorithm : main protocol}
	\end{algorithm}

	\subsection{Create the First Dominant Agent}
	\label{Subsection: create the first dominant agent}

	The CreateDominance Protocol (Algorithm \ref{Algorithm: create dominance}) depicts how a specified agent $r$ can be made dominant.
	By Observation \ref{Observation: dominance}, it suffices for the task if we are able to find an allocation in which agent $r$ is envy-free and additionally she dominates one of her neighbors.
	To this end, we resort to the Core Protocol and manipulate the snapshots returned by running it multiple times on the continuously updated residue.
	By the definition of insignificant pieces (Definition \ref{Definition: insignificant piece}), agent $r$ will finally dominate one of her neighbors if this neighbor is allocated an insignificant piece in some snapshot.
	Unfortunately, however, it could be the case that the insignificant pieces are always allocated to agent $r$'s non-neighbors.
	Consequently, the most difficult part of the task lies in how it is guaranteed that an insignificant piece is allocated to some neighbor of agent $r$.
	We manage to overcome this difficulty by exchanging agents' pieces allocated in the same snapshot, in the meanwhile maintaining the proportionality of the overall allocation.
	Our key observation is that a pair of \emph{adjacent} agents is able to \emph{safely} exchange their pieces in one of at most $2n$ snapshots.
	Specifically, two adjacent agents will exchange their pieces allocated in the snapshot where their bonuses over each other are both smaller than those in the remaining snapshots.
	In doing so, the loss incurred by exchanging pieces in the selected snapshot will be bounded by the bonuses reserved in the other snapshots, leading the overall allocation to remain proportional.
	As an immediate application of the observation, our protocol manages to transfer the insignificant pieces held by an agent $u$ to some neighbor of agent $r$ along a shortest path from agent $u$ to agent $r$.
	The parameter $C'$ at the beginning of the protocol denotes the number of snapshots required to ensure that some agent $u$ holds an enough number of insignificant pieces such that at least one of them can be successfully transferred to some neighbor of agent $r$.
	The value of $C'$ will be determined in Lemma \ref{Lemma: the complexity of the CreateDominance Protocol}.

	\begin{algorithm}
		\SetAlgoLined
		\KwIn{A graph $G=(V,E)$ with $|V|=n$, a cake $R$, and an agent $r$.}
		\KwOut{A (possibly partial) proportional allocation of $R$ and a corresponding residue, where $r$ is dominant.}

		\For{$j = 1, \cdots, C'$}{
			Run Core$(r, V, R)$ to obtain a snapshot.

			\If{there is a neighbor $v$ of agent $r$ such that agent $v$ obtains an insignificant piece in this snapshot}{
				Run Core$(r, V, R)$ $\frac{n}{2}\ln (n-2)$ times to make agent $r$ dominate agent $v$.

				\Return the current allocation of $R$ and an updated residue.
			}
		}

		Pick an agent $u$ who obtains insignificant pieces in at least $C = C' / (n - 1)$ snapshots.
		Denote by $\mathcal{S}_1 = \{S^j : j = 1, \cdots, C\}$ the collection of these snapshots.

		Find a shortest path $P = u_1 u_2 \cdots u_k r \, (u_1 = u)$ from agent $u$ to agent $r$ (here $k \geq 2$).

		\For{$i = 1, \cdots, k - 1$}{
			Agent $u_i$ chooses a collection $\mathcal{S}'_i$ of $\frac{d_{u_i}|\mathcal{S}_i|}{d_{u_i} + d_{u_{i + 1}} + 1}$ snapshots from $\mathcal{S}_i$ for which she values her bonus over agent $u_{i+1}$ the most.

			Agent $u_{i+1}$ chooses a collection $\mathcal{S}''_i$ of $\frac{d_{u_{i + 1}}|\mathcal{S}_i|}{d_{u_i} + d_{u_{i + 1}} + 1}$ snapshots from $\mathcal{S}_i \backslash \mathcal{S}'_i$ for which she values her bonus over agent $u_i$ the most.

			Let $\mathcal{S}_{i + 1} = \mathcal{S}_i \backslash (\mathcal{S}'_i \cup \mathcal{S}''_i)$.

			For each snapshot in $\mathcal{S}_{i + 1}$, exchange pieces held by agent $u_i$ and agent $u_{i+1}$.
		}

		Run Core$(r, V, R)$ $\frac{n}{2}\ln (n-2)$ times to make agent $r$ dominate agent $u_k$.

		\Return a (possibly partial) allocation of $R$ and an updated residue.

		\caption{CreateDominance}
		\label{Algorithm: create dominance}
	\end{algorithm}

	\begin{lemma}
		\label{Lemma: the analysis of the CreateDominance Protocol}
		Algorithm \ref{Algorithm: create dominance} returns a (possibly partial) locally proportional allocation on graph $G$ where agent $r$ is dominant.
	\end{lemma}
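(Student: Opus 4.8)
The plan is to establish two things about the output of Algorithm~\ref{Algorithm: create dominance}: first, that agent $r$ ends up dominant, and second, that the overall allocation remains locally proportional throughout. These correspond to the two possible exit points of the protocol (the early return inside the loop versus the return after the path-exchange phase), so I would handle them as two cases but unify the proportionality argument.

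For the \textbf{dominance of $r$}, the easy case is the early return: if some neighbor $v$ of $r$ receives an insignificant piece in a snapshot, then by the discussion following Definition~\ref{Definition: insignificant piece} (running Core$(r,V,R)$ an additional $\frac{n}{2}\ln(n-2)$ times), agent $r$ comes to dominate $v$. Since every snapshot returned by the Core Protocol is globally envy-free, $r$ is envy-free in the accumulated allocation, and Observation~\ref{Observation: dominance} immediately yields that $r$ is dominant. The harder case is the path-exchange exit: here I first need to invoke Observation~\ref{Observation: existence of an insignificant piece} to guarantee that in each of the $C'$ snapshots \emph{some} agent holds an insignificant piece; by pigeonhole over the $n-1$ non-$r$ agents, some agent $u$ holds insignificant pieces in at least $C = C'/(n-1)$ snapshots, justifying the choice of $u$ and $\mathcal{S}_1$. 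The goal of the exchange phase is to migrate one such insignificant piece from $u$ along the shortest path $u_1u_2\cdots u_k r$ until it reaches the neighbor $u_k$ of $r$; then the same $\frac{n}{2}\ln(n-2)$-fold Core repetition makes $r$ dominate $u_k$, and again Observation~\ref{Observation: dominance} finishes. I would need to track that the set sizes $|\mathcal{S}_{i+1}| = |\mathcal{S}_i| - \frac{d_{u_i}|\mathcal{S}_i|}{d_{u_i}+d_{u_{i+1}}+1} - \frac{d_{u_{i+1}}|\mathcal{S}_i|}{d_{u_i}+d_{u_{i+1}}+1} = \frac{|\mathcal{S}_i|}{d_{u_i}+d_{u_{i+1}}+1}$ stay positive down the path, which constrains how large $C$ (hence $C'$) must be chosen, deferring the exact value to Lemma~\ref{Lemma: the complexity of the CreateDominance Protocol}.

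The \textbf{main obstacle}, as the surrounding text flags, is showing that each piece-exchange between adjacent agents $u_i$ and $u_{i+1}$ preserves local proportionality. The key idea I would formalize is the ``safe exchange'' observation: when $u_i$ and $u_{i+1}$ swap their pieces in the snapshots of $\mathcal{S}_{i+1}$, each may lose bonus over the \emph{other}, but because $u_i$ has reserved away the snapshots $\mathcal{S}'_i$ in which her bonus over $u_{i+1}$ was \emph{largest} (and symmetrically $u_{i+1}$ reserved $\mathcal{S}''_i$), the total bonus lost in the swapped snapshots is bounded by the bonus retained in the reserved snapshots. Concretely, for agent $u_i$ I would argue that averaging gives: the bonus she gives up over $u_{i+1}$ across $\mathcal{S}_{i+1}$ is at most $d_{u_i}$ times the bonus she keeps over $u_{i+1}$ in $\mathcal{S}'_i$ (this is exactly what the ratio $\frac{d_{u_i}}{d_{u_i}+d_{u_{i+1}}+1}$ is engineered to deliver), so that after the swap her \emph{total} bonus over all $d_{u_i}$ neighbors stays nonnegative and she remains proportional. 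The symmetric bound handles $u_{i+1}$, and crucially the swap changes only the pieces of $u_i$ and $u_{i+1}$, so every \emph{other} agent's valuation of every piece she cares about is untouched and their proportionality is trivially preserved.

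Finally I would assemble these pieces: the exchange phase modifies only the allocations within the snapshots of $\mathcal{S}_1$, leaving the agents outside the path and the residue structure intact, so the accumulated allocation after all $k-1$ exchanges is still locally proportional; the concluding Core repetitions only add envy-free increments, preserving proportionality while securing $r$'s domination of $u_k$. Combining the dominance conclusion with the invariant that no exchange or Core invocation ever breaks any agent's proportionality gives the claim that Algorithm~\ref{Algorithm: create dominance} returns a locally proportional (possibly partial) allocation in which $r$ is dominant. I expect the delicate averaging inequality in the swap-safety step to be where the real work lies, and I would isolate it as a self-contained sub-claim before stitching everything together.
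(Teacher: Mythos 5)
Your proposal follows essentially the same route as the paper: handle the early-return case via Observation~\ref{Observation: dominance}, use Observation~\ref{Observation: existence of an insignificant piece} plus pigeonhole to find $u$, migrate an insignificant piece along the shortest path by safe exchanges whose losses are covered by bonuses reserved in the set-aside snapshots, and finish with the $\frac{n}{2}\ln(n-2)$ Core repetitions. The covering mechanism you describe (loss over all $d_{u_i}$ neighbors in $\mathcal{S}_{i+1}$ bounded by the retained bonus over $u_{i+1}$ in the $d_{u_i}|\mathcal{S}_{i+1}|$ snapshots of $\mathcal{S}'_i$) is exactly the paper's argument.

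Two points in your sketch are not right as stated. First, your claim that agents off the path are fine because ``every other agent's valuation of every piece she cares about is untouched'' does not hold: a neighbor $w$ of $u_i$ who is not on the path sees $u_i$'s holding in snapshot $S^j$ change after the swap, so $w$'s bonus over $u_i$ does change. The correct reason (the one the paper uses) is that such a $w$ never has her own piece exchanged and each snapshot is globally envy-free, so $w$ always holds a weakly best piece of every snapshot and remains envy-free no matter how the other pieces of that snapshot are permuted. Second, you treat $u_i$ and $u_{i+1}$ symmetrically in the swap-safety step, but they are not symmetric: in snapshots of $\mathcal{S}_{i+1}$, agent $u_i$ has already been swapped once (with $u_{i-1}$), so her pre-exchange bonuses $b_j(u_i,v)$ may be negative. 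The argument still goes through because that earlier deficit was already covered by $\mathcal{S}''_{i-1}$, a disjoint reserve, so only the \emph{additional} loss $-b_j(u_i,u_{i+1})$ per neighbor needs to be charged to $\mathcal{S}'_i$; and the agents $u_j$ with $j<i$ stay proportional because they are never handed a second inferior piece. You flag this region as where the real work lies, but the accounting for the doubly-exchanged interior agents is a concrete step your proposal would need to add.
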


	\begin{proof}
		We assume w.l.o.g.~that the insignificant pieces in the $C'$ snapshots are always allocated to agent $r$'s non-neighbors, since otherwise by Observation \ref{Observation: dominance}, agent $r$ will be dominant after step 4.
		Under this assumption, some non-neighbor $u$ of agent $r$ is guaranteed to obtain the insignificant pieces in at least $C = C' / (n - 1)$ snapshots by an averaging argument.
		Below we show that every agent remains proportional in the overall allocation throughout the remaining part of the protocol.

		First observe that agents \emph{not} in the path $P' = u_1 u_2 \cdots u_k$ remain envy-free and thus proportional throughout the protocol, since they never have their pieces exchanged, and a snapshot is globally envy-free, implying that they always hold the best piece in a snapshot.

		Next we show by induction that each agent in $P'$ remains proportional after agent $u_i$ and agent $u_{i + 1}$ exchanged their pieces in snapshots from $\mathcal{S}_{i + 1}$, assuming that each agent is proportional before the exchange.
		First observe that for any $j > i + 1$, agent $u_j$ will remain envy-free and thus proportional, since she has not yet had her pieces exchanged.
		Next, we consider agent $u_{i + 1}$ and agent $u_i$.
		Below we use $b_j(u, v)$ to denote agent $u$'s bonus over agent $v$ in a given snapshot $S^j$ \emph{before} the exchange and $b'_j(u, v)$ to denote the corresponding bonus \emph{after} the exchange.

		For agent $u_{i + 1}$, we show that her loss incurred by exchanging pieces in snapshots from $\mathcal{S}_{i + 1}$ is covered by her bonuses over agent $u_i$ in snapshots from $\mathcal{S}''_i$, thus remaining proportional.
		Given a snapshot $S^j$ from $\mathcal{S}_{i + 1}$, the bonuses $b_j(u_{i + 1}, v)$ before the exchange are non-negative, since agent $u_{i + 1}$ has not had her piece exchanged.
		After the exchange, the updated bonuses $b'_j$ satisfy that $b'_j(u_{i + 1}, u_i) = - b_j(u_{i + 1}, u_i)$ and $b'_j(u_{i + 1}, v) = b_j(u_{i + 1}, v) - b_j(u_{i + 1}, u_i)$ for agent $v \neq u_i$.
		Consequently, the loss in agent $u_{i + 1}$'s bonus over her neighbors incurred by exchanging pieces in $S^j$ is $d_{u_{i + 1}} \cdot b_j(u_{i + 1}, u_i)$.
		The total loss incurred by exchanging pieces in snapshots from $\mathcal{S}_{i + 1}$ is therefore $d_{u_{i + 1}} \cdot \sum_{S^j \in \mathcal{S}_{i + 1}} b_j(u_{i + 1}, u_i)$.
		By the choice of $\mathcal{S}''_i$, however, $|\mathcal{S}''_i| = d_{u_{i + 1}} \cdot |\mathcal{S}_{i + 1}|$ and agent $u_{i + 1}$'s bonus over agent $u_i$ in any snapshot from $\mathcal{S}''_i$ is at least that in any snapshot from $\mathcal{S}_{i + 1}$.
		Therefore, the loss has been totally covered.

		The same conclusion holds for agent $u_i$, but it deserves to illustrate some additional details.
		Note that the bonuses $b_j(u_i, v)$ in a given snapshot $S_j$ before the exchange could be negative.
		They are indeed the losses incurred when agent $u_{i -1}$ and agent $u_i$ exchanged their pieces and have already been covered at that time.
		Hence there is no need to worry about the first part of the bonuses $b'_j(u_i, v) = b_j(u_i, v) - b_j(u_i, u_{i  + 1})$ after the exchange.
		The second part is bounded by the same argument as that for agent $u_{i + 1}$.

		For any $j < i$, the proportionality of agent $u_j$ follows immediately from the argument below.
		By the analysis above, the reason why an agent remains proportional is that whenever she is reallocated an inferior piece in a snapshot, the loss is covered as if she is reallocated a piece of the same value as the best piece in this snapshot.
		As a result, this agent will still remain proportional as long as she is not reallocated another inferior piece in the future.

		Agent $r$ remains envy-free throughout the protocol.
		Let $C'$ be the value determined in Lemma \ref{Lemma: the complexity of the CreateDominance Protocol} such that $\mathcal{S}_k$ is non-empty, the neighbor $u_k$ of agent $r$ is guaranteed to obtain the insignificant pieces in the snapshots from $\mathcal{S}_k$.
		By Observation \ref{Observation: dominance}, agent $r$ will be dominant after step 16.

	\end{proof}

    \begin{lemma}
    	\label{Lemma: the complexity of the CreateDominance Protocol}
    	Algorithm \ref{Algorithm: create dominance} uses $O(n \cdot \left( \frac{6n}{k} \right) ^ k T(n))$ queries by setting $C = \left(\frac{6n}{k}\right)^k$, where $k$ is the length of the shortest path $P$ found by Algorithm \ref{Algorithm: create dominance} and $T(n)$ is the number of queries used in the Core Protocol.
    \end{lemma}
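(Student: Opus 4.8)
The plan is to bound the total cost by the number of Core Protocol invocations, since each invocation costs $T(n)$ queries while every other operation is comparatively cheap. I would count the invocations directly. The main loop runs the Core Protocol once per iteration and so contributes at most $C'$ invocations, while the early-return branch and the final domination step each invoke it $\frac{n}{2}\ln(n-2)$ more times. The key observation is that the exchange loop invokes the Core Protocol \emph{zero} times: it only reselects and swaps pieces inside already-computed snapshots, so it costs merely $O(n^2C)$ evaluation queries, a lower-order term next to the Core invocations. Since $C=C'/(n-1)$, the number of invocations is $C'+O(n\log n)=O(nC)$, so the total query complexity is $O(nC\,T(n))$, and it remains only to pin down $C$.

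Next I would verify that $C=(6n/k)^k$ is large enough for the protocol to succeed, i.e.\ that $\mathcal{S}_k$ stays nonempty so that the neighbor $u_k$ of $r$ still holds an insignificant piece after all exchanges. The selection rule discards from $\mathcal{S}_i$ exactly a $\frac{d_{u_i}+d_{u_{i+1}}}{d_{u_i}+d_{u_{i+1}}+1}$ fraction, which yields the recurrence $|\mathcal{S}_{i+1}|=|\mathcal{S}_i|/(d_{u_i}+d_{u_{i+1}}+1)$ and telescopes to
\[
|\mathcal{S}_k|=\frac{C}{\prod_{i=1}^{k-1}\left(d_{u_i}+d_{u_{i+1}}+1\right)}.
\]
It therefore suffices to show the denominator is at most $C$.

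Bounding this product is the crux, and it is where the shortest-path structure of $P$ must be used with care, since the naive degree estimate is not enough: one has to exploit that a shortest path is \emph{induced}. Indeed, $P$ is chordless (a chord would shortcut it), so its $k+1$ vertices span exactly $k$ internal edges; and any outside vertex adjacent to two path vertices $u_i,u_j$ forces $d(u_i,u_j)\le 2$, so each of the $n-k-1$ outside vertices touches at most three path vertices. Summing degrees over all path vertices then gives $2k+3(n-k-1)\le 3n-k-3$, hence $\sum_{i=1}^{k}d_{u_i}\le 3n-k-3$ and
\[
\sum_{i=1}^{k-1}\left(d_{u_i}+d_{u_{i+1}}+1\right)\le 2\sum_{i=1}^{k}d_{u_i}+(k-1)\le 6n-k-7.
\]
Applying AM--GM to the $k-1$ positive factors bounds the product by $\left(\frac{6n-k-7}{k-1}\right)^{k-1}\le\left(\frac{6n}{k-1}\right)^{k-1}$, and a short monotonicity argument on $g(k)=k\ln(6n/k)$ (whose derivative is positive for $k<6n/e$, which holds since $k\le n-1$) gives $\left(\frac{6n}{k-1}\right)^{k-1}\le\left(\frac{6n}{k}\right)^{k}=C$. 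Thus $|\mathcal{S}_k|\ge 1$, and together with the invocation count this yields the claimed $O\!\left(n\left(\frac{6n}{k}\right)^{k}T(n)\right)$ bound. I expect this final chain to be the delicate part: the induced-path refinement is exactly what supplies the extra $-k$ in the degree sum, and that slack is precisely what collapses the exponent-$(k-1)$ estimate into the exponent-$k$ quantity $C$.
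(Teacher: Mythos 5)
Your proof is correct and follows essentially the same route as the paper: count the Core Protocol invocations as $O(nC)$, telescope $|\mathcal{S}_{i+1}| = |\mathcal{S}_i|/(d_{u_i}+d_{u_{i+1}}+1)$, bound $\sum_i d_{u_i}$ by roughly $3n-k$ using the fact that a vertex off a shortest path can be adjacent to at most three of its vertices (the paper phrases this via a mod-3 partition of the path, you via direct edge counting, but it is the same structural fact), and finish with AM--GM. Your version is in fact slightly more careful than the paper's in two places --- explicitly noting that the exchange loop makes no Core calls, and supplying the monotonicity argument that converts the $(k-1)$-factor AM--GM bound $\left(\frac{6n}{k-1}\right)^{k-1}$ into $\left(\frac{6n}{k}\right)^{k}$, a step the paper's displayed inequality glosses over.
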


    \begin{proof}
    	We bound the complexity of the protocol by determining an appropriate value of $C$ such that $\mathcal{S}_k$ is non-empty.
    	Since $|\mathcal{S}_i| = (d_{u_i} + d_{u_{i + 1}} + 1) \cdot |\mathcal{S}_{i + 1}|$, a lower bound of $C$ is
    	\[C = |\mathcal{S}_1| \geq \prod_{i = 1}^{k - 1}(d_{u_i} + d_{u_{i + 1}} + 1).\]
    	For $i = 1, 2, 3$, let $k_i = |\{u_j \in P \mid j \equiv i \pmod 3 \}|$.
    	Thus $k_1 + k_2 + k_3 = k$.
    	Since $P$ is a shortest path, agent $u_i$ and agent $u_j$ have no common neighbors if $|i - j| > 2$.
    	Therefore, for $i = 1, 2, 3$,
    	\[d_{u_i} + d_{u_{i + 3}} + \cdots + d_{u_{3(k_i - 1) + i}} + k_i \leq n.\]
    	Summing up these inequalities yields
    	\[\sum_{i=1}^{k}d_{u_i} + k \leq 3n.\]
    	By the fundamental inequality,
    	\[\prod_{i = 1}^{k - 1}(d_{u_i} + d_{u_{i + 1}} + 1) \leq \left(\frac{2\sum_{i=1}^{k}d_{u_i} + k}{k}\right)^k \leq \left(\frac{6n}{k}\right)^k.\]
    	Hence it suffices to set $C = \left(\frac{6n}{k}\right)^k$.
    	The lemma follows since the complexity of the Core Protocol is dominant in Algorithm \ref{Algorithm: create dominance} and it is invoked $O(nC)$ times.
    \end{proof}

    We summarize the above results as Theorem \ref{Theorem: the analysis of the CreateDominance Protocol}.

    \begin{theorem}
    	\label{Theorem: the analysis of the CreateDominance Protocol}
    	Algorithm \ref{Algorithm: create dominance} returns a (possibly partial) locally proportional allocation on graph $G$ where agent $r$ is dominant and uses $O(n \cdot \left( \frac{6n}{k} \right) ^ k T(n))$ queries,  where $k$ is the length of the shortest path $P$ found by Algorithm \ref{Algorithm: create dominance} and $T(n)$ is the number of queries used in the Core Protocol.
    \end{theorem}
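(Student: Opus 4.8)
The plan is to obtain the theorem as a direct consolidation of the two preceding lemmas, since it merely bundles the correctness guarantee of Lemma~\ref{Lemma: the analysis of the CreateDominance Protocol} together with the query bound of Lemma~\ref{Lemma: the complexity of the CreateDominance Protocol}. First I would invoke Lemma~\ref{Lemma: the analysis of the CreateDominance Protocol} to conclude that, provided the number of snapshots is chosen so that the final collection $\mathcal{S}_k$ is nonempty, the allocation returned by Algorithm~\ref{Algorithm: create dominance} is locally proportional and agent $r$ is dominant in it. The proportionality part rests on the piece-exchange invariant established there, namely that each time an agent is reallocated an inferior piece the incurred loss is pre-covered by the bonuses reserved in the snapshots from $\mathcal{S}'_i$ and $\mathcal{S}''_i$; the dominance of $r$ then follows from Observation~\ref{Observation: dominance}, because the neighbor $u_k$ of $r$ holds insignificant pieces throughout $\mathcal{S}_k$ and $r$ is driven to dominate $u_k$ in the final $\frac{n}{2}\ln(n-2)$ calls to the Core Protocol.

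Next I would supply the missing parameter and the complexity count from Lemma~\ref{Lemma: the complexity of the CreateDominance Protocol}. The key bridge between the two lemmas is the requirement that $\mathcal{S}_k$ be nonempty: since $|\mathcal{S}_{i+1}| = |\mathcal{S}_i| / (d_{u_i} + d_{u_{i+1}} + 1)$, the cardinalities shrink geometrically along the path $P$, so it suffices that the initial count $C = |\mathcal{S}_1|$ dominate $\prod_{i=1}^{k-1}(d_{u_i} + d_{u_{i+1}} + 1)$. Lemma~\ref{Lemma: the complexity of the CreateDominance Protocol} bounds this product by $(6n/k)^k$ using the fact that a shortest path has no common neighbors between vertices more than two apart, which yields $\sum_i d_{u_i} + k \le 3n$ and then the fundamental inequality. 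Hence setting $C = (6n/k)^k$ (equivalently $C' = (n-1)C$) simultaneously guarantees $\mathcal{S}_k \neq \emptyset$ and keeps the overall number of Core Protocol invocations at $O(nC)$, each costing $T(n)$ queries, for the claimed total of $O(n (6n/k)^k T(n))$.

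The only point that requires care---and the step I expect to be the main obstacle---is verifying that the value of $C$ used to certify correctness is exactly the value whose complexity is analyzed, so that both conclusions hold for a single run of the protocol. Concretely, the correctness argument treats the snapshot count as an unspecified quantity ``large enough'' to make $\mathcal{S}_k$ nonempty, whereas the complexity argument pins it down; I would therefore state the theorem for the specific choice $C = (6n/k)^k$ and observe that this choice feeds consistently into both arguments. With that alignment made explicit, the two lemmas compose without further work and the theorem follows.
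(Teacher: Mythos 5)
Your proposal matches the paper exactly: the paper offers no separate proof of this theorem, introducing it only with ``We summarize the above results as Theorem~\ref{Theorem: the analysis of the CreateDominance Protocol}'', i.e.~it is a direct consolidation of Lemma~\ref{Lemma: the analysis of the CreateDominance Protocol} (correctness and dominance of $r$) and Lemma~\ref{Lemma: the complexity of the CreateDominance Protocol} (the choice $C = (6n/k)^k$ and the query count). Your added remark that the same value of $C$ must serve both arguments is a reasonable clarification of the bridge the paper leaves implicit, but it does not change the route.
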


	\subsection{Make Every Agent Dominant}
	\label{Subsection: make every agent dominant}

	In principle, each agent can be made dominant by running the CreateDominance Protocol with it as an input.
	However, the DiffuseDominance Protocol (Algorithm \ref{Algorithm: diffuse dominance}) shows that it is possible to fulfill the task by only using additional $O(n^2)$ calls of the Core Protocol once the first dominant agent $r$ was created.
	The key observation is that a dominant agent will remain proportional even if she is excluded from the allocation of the residue.
	Thus by invoking the Core Protocol to allocate the residue among the remaining agents, the neighbors of a dominant agent are able to collect enough bonuses over her and finally dominate her.

	\begin{algorithm}
		\SetAlgoLined
		\KwIn{A graph $G=(V,E)$ with $|V|=n$, the residue $R$,  a dominant agent $r \in V$.}
		\KwOut{A proportional allocation of the cake $R$ on graph $G$.}
		$D = \{r\}$.

		\While{$D \neq V$}{
			Pick a neighbor $v$ of $D$.

			Run Core$(v, V \backslash D, R)$ $K_v$ times, where $K_v$ will be determined in the analysis.

			$D = D \cup \{v\}$.
		}

		\Return a proportional allocation of the cake $R$ on graph $G$.

		\caption{DiffuseDominance}
		\label{Algorithm: diffuse dominance}
	\end{algorithm}

	\begin{theorem}
		\label{Theorem: the analysis of the DiffuseDominance Protocol}
		Algorithm \ref{Algorithm: diffuse dominance} returns a locally proportional complete allocation of the cake $R$ on graph $G$ with $O(n^2)$ calls of the Core Protocol.
	\end{theorem}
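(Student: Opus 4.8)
The plan is to run the loop of Algorithm \ref{Algorithm: diffuse dominance} while maintaining the invariant that \emph{every agent currently in $D$ is dominant} with respect to the running partial allocation and residue, and to argue that this invariant, together with the terminating condition $D=V$, already delivers a complete locally proportional allocation. I would prove the invariant by induction on $|D|$. The base case is immediate, since $r$ enters the protocol dominant (guaranteed by the CreateDominance stage). The inductive step splits into two claims about each loop iteration, in which a fresh neighbor $v$ of $D$ is picked and Core$(v,V\setminus D,R)$ is run $K_v$ times: first that $v$ becomes dominant, and second that every agent already in $D$ stays dominant.

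To make $v$ dominant I would invoke Observation \ref{Observation: dominance}, so it suffices to show that after the $K_v$ calls $v$ is envy-free over all of its neighbors and dominates at least one of them. For envy-freeness, note that $v$ has been an active participant in every batch run so far, and that in each snapshot the global envy-freeness of the Core Protocol gives $f_v(X_v)\ge f_v(X_u)$ for every other participant $u$. Summing these inequalities over all snapshots in which both $v$ and a fixed neighbor $u$ are active — and using that $v$ receives only non-negative pieces in the remaining batches it joins — yields $f_v(A_v)\ge f_v(A_u)$ at the instant $v$ is processed, whether $u$ lies in $D$ or not. For the domination, I use that $v$, being a neighbor of $D$, has a neighbor $w\in D$; this $w$ is frozen, so $f_v(A_w)$ does not change, while $v$, acting as cutter, collects a complete piece worth $f_v(R)/m$ of the current residue in each of the $K_v$ calls, where $m=|V\setminus D|$. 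Writing $\rho_0\ge\rho_1\ge\cdots$ for the non-increasing residue values $f_v(\cdot)$ during the batch, after $j$ calls $v$'s bonus over $w$ is at least $\frac{1}{m}\sum_{i=0}^{j-1}\rho_i\ge \frac{j}{m}\rho_{j}$, while the residue is worth $\rho_j$. Hence taking $K_v=m\le n$ already forces $v$'s bonus over $w$ to reach $\rho_m$, i.e.\ $v$ dominates $w$ in the sense of Definition \ref{Definition: dominance between a pair of agents}. Combined with envy-freeness, Observation \ref{Observation: dominance} makes $v$ dominant.

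For preservation of the already-settled agents, take $w\in D$ dominant before the batch. Since $w$ is excluded from Core$(v,V\setminus D,R)$, $f_w(A_w)$ is unchanged, whereas every unit of allocated residue decreases $f_w(R)$ by exactly the amount by which it can increase $\sum_{u\in N(w)}f_w(A_u)$ (and by less when it is handed to a non-neighbor). Therefore $\sum_{u\in N(w)}f_w(A_u)+f_w(R)$ is non-increasing, so the dominance inequality of Definition \ref{Definition: dominant agents}, once true, remains true. This closes the induction, and on termination every agent of $V$ is dominant. At that moment the residue has been allocated completely (the last call, with a participant set of size two or one, exhausts it), and dominance of each $v$ gives $f_v(A_v)\ge\frac{1}{d_v}\sum_{u\in N(v)}f_v(A_u)$, i.e.\ the complete allocation is locally proportional. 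The complexity then follows by collecting counts: the loop iterates $n-1$ times with $K_v\le n$ Core calls each, for a total of $O(n^2)$ calls.

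The step I expect to be the main obstacle is the envy-freeness claim for $v$. It is tempting but incorrect to read it off a single snapshot; the argument must instead track \emph{cumulative} shares across batches and use precisely the fact that an agent's share freezes upon entering $D$, which is what prevents a neighbor scheduled for later processing from overtaking $v$ at the exact moment $v$ is declared dominant. A secondary point that must be handled with care is the monotonicity argument preserving dominance of the settled agents, since it is what justifies never revisiting an agent once it is in $D$ and hence underlies the linear per-iteration bound.
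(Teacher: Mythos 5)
Your overall induction (every agent in $D$ stays dominant, each newly picked $v$ becomes dominant, and termination with $D=V$ plus the exhaustive final Cut-and-Choose call yields a complete locally proportional allocation) matches the paper's structure, and your monotonicity argument for why settled agents remain dominant is correct and worth making explicit. The genuine gap is exactly the step you flagged: the claim that $v$ is envy-free toward all its neighbors at the moment it is processed. This is not repaired by tracking cumulative shares across the DiffuseDominance batches, because the allocation handed over by CreateDominance is only locally \emph{proportional}, not locally envy-free: the exchange operations there can leave an agent on the exchange path with a strictly negative bonus over some neighbor (the proof of Lemma \ref{Lemma: the analysis of the CreateDominance Protocol} explicitly allows $b_j(u_i, v) < 0$ and only guarantees that the \emph{sum} of bonuses over all neighbors stays nonnegative). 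If the $v$ picked in some iteration is such an agent, the nonnegative per-snapshot bonuses accumulated afterwards need not erase that initial deficit, so Observation \ref{Observation: dominance} cannot be invoked. The same defect infects your domination estimate: the bonus of $v$ over the frozen neighbor $w \in D$ after $j$ calls is $\frac{1}{m}\sum_{i=0}^{j-1}\rho_i$ \emph{plus} the bonus $v$ had over $w$ entering the batch, which may be negative, so $K_v = m$ calls need not suffice to reach Definition \ref{Definition: dominance between a pair of agents}.

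The paper's proof avoids both problems by never arguing per neighbor. It bounds the \emph{aggregate} bonus of $v$ over all its neighbors: the initial aggregate is nonnegative by local proportionality of the incoming allocation; each snapshot contributes at least $f_v(X_v^k)$ to this aggregate (the frozen neighbor $r' \in D$ receives an empty piece and so contributes exactly $f_v(X_v^k)$, while every active neighbor contributes a nonnegative amount by the snapshot's envy-freeness); and $\sum_{k} f_v(X_v^k) \geq f_v(R^{K_v})$ for $K_v = (m-2)\ln 3$ by the geometric decay of the residue. This directly certifies the inequality in Definition \ref{Definition: dominant agents} without passing through Observation \ref{Observation: dominance}. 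To repair your proof you would need to replace the per-neighbor claims with this aggregate accounting (or else strengthen CreateDominance to output a locally envy-free allocation, which it does not do).
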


	\begin{proof}
		We first fix some notations for a particular `while' loop where a neighbor $v$ of $D$ serves as a cutter of the Core Protocol.
		Let $\{A_u^0\}_{u \in V}$ be the partial allocation and $R^0$ be the residue just before this loop.
		For $k = 1, \cdots, K_v$, let $\{X_u^k\}_{u \in V}$ be the snapshot and $R^k = R^{k - 1} \backslash (\cup_{u \in V} X_u^k)$ be the updated residue returned by Core$(v, V \backslash D, R^{k - 1})$.
		Then $A_u^{K_v} = A_u^0 \cup (\cup_{i=1}^{K_v} X_u^i)$ denotes the pieces allocated to agent $u \in V$ and $R^{K_v}$ denotes the residue when the `while' loop terminates.
		Assume by induction that the allocation $\{A_u^0\}_{u \in V}$ is locally proportional and agents in $D$ are dominant, we prove that a) the allocation $\{A_u^{K_v}\}_{u \in V}$ is still locally proportional, and b) agent $v$ becomes dominant by setting an appropriate value for $K_v$.
		Note that the base case holds due to Theorem \ref{Theorem: the analysis of the CreateDominance Protocol}, by running Algorithm \ref{Algorithm: create dominance} before Algorithm \ref{Algorithm: diffuse dominance}.
		
		During the `while' loop, the residue is allocated among agents in $V \backslash D$ by the Core Protocol in an envy-free manner, thus agents in $V \backslash D$ remain proportional.
		Besides, since agents in $D$ are already dominant, they will remain proportional despite how the residue is allocated.
		As a result, the allocation $\{A_u^{K_v}\}_{u \in V}$ is still locally proportional.
		
		Let $|V \backslash D| = m$.
		We show that agent $v$ will be dominant by setting $K_v = (m - 2) \cdot \ln 3$ if $m \geq 3$ and $1$ otherwise.
		When $m = 2$, the Core Protocol reduces to the famous Cut and Choose Protocol where the residue is allocated completely among agents in $V \backslash D$.
		Since at this point the whole cake has been allocated, the protocol terminates with the desired properties.
		When $m \geq 3$, we first claim that agent $v$'s bonus over her neighbors is at least $\sum_{k=1}^{K_v} f_v(X_v^k)$.
		Since agent $v$ is a neighbor of $D$, there exists an agent $r' \in D$ such that they are adjacent.
		For each snapshot $\{X_u^k\}_{u \in V}$, the piece $X_{r'}^k$ is empty since agent $r'$ was excluded from the allocation of $R^{k - 1}$.
		Hence agent $v$'s bonus over agent $r'$ in this snapshot is $f_v(X_v^k)$.
		The claim follows by the assumption that $\{A_u^0\}_{u \in V}$ is locally proportional.
		Next we show that the value $\sum_{k=1}^{K_v} f_v(X_v^k)$ is larger than the value $f_v(R^{K_v})$, which implies that agent $v$ is dominant.
		By the properties of the Core Protocol, we have $f_v(X_v^k) = \frac{1}{m} f_v(R^{k-1})$ and $f_v(R^k) \leq \frac{m-2}{m} f_v(R^{k-1})$. Therefore,
		\begin{eqnarray*}
			\sum_{k=1}^{K_v}f_v(X_v^k) &=& \frac{1}{m}\sum_{k=0}^{K_v-1}f_v(R^k) \\
			&\geq& \frac{1}{m}\sum_{k=0}^{K_v-1}\left(\frac{m}{m-2}\right)^{K_v-k}f_v(R^{K_v}) \\
			&=& \frac{1}{2}\left(\left(\frac{m}{m-2}\right)^{K_v}-1\right)f_v(R^{K_v}) \\
			&\geq& f_v(R^{K_v}).
		\end{eqnarray*}
		The last inequality holds since we choose $K_v = (m - 2) \cdot \ln 3$.

		Finally, since there are $n$ agents and each agent invokes the Core Protocol $K_v = O(n)$ times, the protocol only needs $O(n^2)$ calls of the Core Protocol.
	\end{proof}

    \subsection{The Complexity of Core Protocol}
    \label{Subsection: the complexity of core protocol}

	To further reduce the overall complexity of our protocol, we give a tighter upper bound of the Core Protocol in this subsection.
	Previously, Aziz and Mackenzie \cite{AzizM16-Discrete-Any} showed that the Core Protocol is bounded by $n^{2n + 3}$ queries.
	They did not optimize their analysis since the main focus of their paper is to obtain a \emph{bounded} protocol.
	For readability of our analysis, we present the Core Protocol (Algorithm \ref{Algorithm: core protocol}) as well as the SubCore Protocol (Algorithm \ref{Algorithm:SubCore Protocol}) it invokes in this subsection.

	\begin{lemma}
		\label{Lemma: complexity of core protocol}
		The Core Protocol uses $O(c^n)$ queries for any constant $c > 4$.
	\end{lemma}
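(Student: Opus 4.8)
The plan is to read a recurrence for the query complexity directly off the pseudocode of the Core Protocol (Algorithm \ref{Algorithm: core protocol}) and the SubCore Protocol (Algorithm \ref{Algorithm:SubCore Protocol}) it invokes, and then to solve that recurrence with an inductive bound of the form $c^{m}\,p(m)$ for a polynomial $p$. Write $T(m)$ for the number of Robertson--Webb queries used by the Core Protocol when $|N| = m$, and $S(m)$ for the number used by one invocation of SubCore on $m$ agents. The non-recursive work at the top of Core is cheap: the cutter makes $m-1$ cut queries to split $R$ into $m$ equally preferable pieces, and each subsequent comparison/trimming round costs $O(m^2)$ evaluation queries, so $T(m) = O(m^2) + (\text{number of SubCore calls}) \cdot S(m-1)$. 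The base cases are immediate, since when $m \le 2$ the protocol is just Cut-and-Choose and uses $O(1)$ queries.

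First I would pin down the branching factor of SubCore, which is the only place exponential growth can enter. From the pseudocode one shows that a single call of SubCore on $m$ agents performs $O(\mathrm{poly}(m))$ direct queries and then spawns recursive calls on strictly fewer agents, with the sizes of the sub-instances controlled so that the recursion tree is governed by a convolution-type recurrence rather than an unrestricted $m$-way branching. Concretely, I expect to establish a bound of the form $S(m) \le \mathrm{poly}(m)\cdot \sum_{j} S(j)\,S(m-1-j)$ (equivalently, after unrolling, $S(m) \le 4\,S(m-1)\,(1+o(1))$), whose solution grows like $4^{m}$ up to polynomial factors. The key quantitative input is that the number and sizes of the recursive sub-instances telescope to a Catalan / central-binomial count $\binom{2m}{m} = \Theta(4^{m}/\sqrt{m})$, which is exactly where the constant $4$ enters.

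Given the recurrence, the final step is a routine induction: assume $S(j) \le M\,c^{j}\,p(j)$ for all $j < m$ with $c > 4$ and a sufficiently large polynomial $p$, substitute into the recurrence, and verify that the $\mathrm{poly}(m)$ prefactor together with the gap $c - 4 > 0$ absorbs all lower-order terms so that $S(m) \le M\,c^{m}\,p(m)$ again; then $T(m) = O(m^2) + O(\mathrm{poly}(m))\cdot S(m-1) = O(c^{m})$ for every $c > 4$, which gives $O(c^n)$ since $m \le n$. The main obstacle is the middle step. Aziz and Mackenzie's $n^{2n+3}$ bound arises precisely because the per-level branching is bounded crudely by a factor of $m$, which compounds to $m^{\Theta(m)}$; to drive the base down to $4$ I must show that the recursive calls act on disjoint or geometrically shrinking blocks of agents, so that the branching is genuinely of convolution type, and I must track the polynomial overhead carefully enough that it stays sub-exponential and is swallowed by the slack $c - 4$. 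Establishing that the total over the recursion tree is $\Theta(4^{n})$ rather than factorial is the crux of the improvement.
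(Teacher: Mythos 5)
There is a genuine gap, and it sits exactly where you flagged it. Your plan hinges on showing that SubCore's recursion is of ``convolution type,'' i.e.\ $S(m) \le \mathrm{poly}(m)\sum_j S(j)S(m-1-j)$, with the constant $4$ entering through the central binomial coefficient $\binom{2m}{m}$. This is not proved in your proposal, and it does not match the algorithm: in the $m$-th iteration of the `for' loop of Algorithm~\ref{Algorithm:SubCore Protocol}, the `while' loop repeatedly calls SubCore on the \emph{same} set of $m-1$ contested pieces with a \emph{growing} agent set $W$, $|W| = 1, \dots, m-2$, plus one final call on $m-1$ agents. So in the worst case the recursive cost of that iteration is $\sum_{i=1}^{m-1}T(m-1,i)$, where $T$ must be tracked as a function of two parameters (number of pieces, number of agents). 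The sub-instances are neither disjoint nor geometrically shrinking blocks of agents, so there is no Catalan-type cancellation to appeal to; a single-parameter bound that replaces each $T(m-1,i)$ by $S(m-1)$ reintroduces a branching factor of $m$ per level and lands you back at $m^{\Theta(m)}$, which is exactly the Aziz--Mackenzie bound you are trying to beat.

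The paper closes the argument differently. It sets $T(n'',n')$ to be the cost of SubCore on $n''$ pieces and $n'$ agents, derives $T(n'',n') \le \sum_{m=1}^{n'}\bigl(\sum_{i=1}^{m-1}T(m-1,i) + O(n'^2 + n'^2 n'')\bigr)$, and proves by induction on the potential $n'+n''$ that $T(n'',n') \le \tilde c^{\,n'+n''}$ for any constant $\tilde c > 2$; the induction closes because each recursive call strictly decreases $n'+n''$ and the resulting series $\sum_{i\ge 1} i\,\tilde c^{-i-1} = 1/(\tilde c - 1)^2$ is at most $1$ precisely when $\tilde c \ge 2$. Since the top-level call has $n'+n'' \le 2n-1$, this yields $O(\tilde c^{\,2n}) = O(c^n)$ with $c = \tilde c^{\,2} > 4$. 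So the constant $4$ is $2^2$ --- a convergence threshold squared because the potential is about $2n$ --- not a central-binomial count. To repair your argument you would need to replace the conjectured convolution recurrence with this two-parameter bookkeeping (or something equivalent); as written, the crucial middle step is both unproved and rests on a structural claim about SubCore that is false.
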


    \begin{proof}
    	We directly analyze the complexity of the SubCore Protocol, since the Core Protcol needs only $n - 1$ additional queries.
    	Denote by $T(n'', n')$ the number of queries that the SubCore Protocol needs to allocate $n''$ pieces among $n'$ agents ($n'' \geq n'$).
    	The `for' loop iterates for $m = 1, \cdots, n'$.
    	In the $m$-th iteration, agent $m$ needs $n''$ evaluations in step 2.
    	In step 5, each agent needs $n'' - (m - 1)$ evaluations on uncontested pieces and $m - 1$ cuts on contested pieces.
    	Thus this step costs at most $n'n''$ queries in total.
    	The `while' loop may iterate for $|W| = 1, \cdots, m - 2$ in the worst case.
    	In each iteration, the SubCore Protocol is invoked in step 9 on the $m - 1$ contested pieces among agents in $W$ and therefore requires $T(m - 1, |W|)$ queries. Step 11 costs $|W| + 1$ evaluations.
    	Step 13 costs another $T(m - 1, m - 1)$ queries.
    	To summarize, the number of queries used in the $m$-th iteration of the `for' loop is at most
    	\[ \sum_{i = 1}^{m - 1} T(m - 1, i) + \sum_{i = 1}^{m - 2} (i + 1) + n'' + n'n''. \]
    	Therefore, $T(n'', n')$ satisfies the recurrence
    	\begin{eqnarray*}
    		T(n'', n') & \leq & \sum_{m = 1}^{n'} \left(\sum_{i = 1}^{m - 1} T(m - 1, i) + {m - 1 \choose 2} + (m - 2) + n'' + n'n''\right) \\
    		&=& \sum_{i = 1}^{n'} \sum_{m = i + 1}^{n'} T(m - 1, i) + {n' \choose 3} + \frac{n'(n' - 3)}{2} + n'n'' + n'^2n'' \\
    		&\leq& \sum_{i = 1}^{n'} (n' - i) T(n' - 1, i) + {n' \choose 3} + \frac{n'(n' - 3)}{2} + n'n'' + n'^2n'' \\
    		&=& \sum_{i = 1}^{n' - 1} i T(n' - 1, n' - i) + {n' \choose 3} + \frac{n'(n' - 3)}{2} + n'n'' + n'^2n''.
    	\end{eqnarray*}
    	We then prove by induction on $n' + n''$ that $T(n'', n') \leq \tilde{c}^{n' + n''}$ for any constant $\tilde{c} > 2$ and sufficiently large $n' + n''$ such that
    	\[ {n' \choose 3} + \frac{n'(n' - 3)}{2} + n'n'' + n'^2n'' \leq \left(1 - \frac{1}{(\tilde{c} - 1)^2}\right) \tilde{c}^{n' + n''}. \]
    	Under the above assumptions and the fact that $n'' \geq n'$, we have
    	\begin{eqnarray*}
    		T(n'', n') & \leq & \sum_{i = 1}^{n' - 1} i \tilde{c}^{2n' - i - 1} + {n' \choose 3} + \frac{n'(n' - 3)}{2} + n'n'' + n'^2n'' \\
    		&\leq& \tilde{c}^{n' + n''} \sum_{i = 1}^{n' - 1} i \tilde{c}^{- i - 1} + \left(1 - \frac{1}{(\tilde{c} - 1)^2}\right) \tilde{c}^{n' + n''} \\
    		&\leq& \frac{1}{(\tilde{c} - 1)^2} \tilde{c}^{n' + n''} + \left(1 - \frac{1}{(\tilde{c} - 1)^2}\right) \tilde{c}^{n' + n''} \\
    		&=& \tilde{c}^{n' + n''}.
    	\end{eqnarray*}
    	Therefore, the Core Protocol is bounded by $n - 1 + T(n, n - 1) = O(\tilde{c}^{2n})$ queries.
    \end{proof}

    \begin{algorithm}
    	\SetAlgoLined
    	\KwIn{Specified cutter (say agent $i \in N$), agent set $N$ such that $i \in N$, and unallocated cake $R$.}
    	\KwOut{An envy-free allocation of cake $R' \subset R$ for agents in N and updated unallocated cake $R \backslash R'$.}
    	
    	Ask agent $i$ to cut the cake $R$ into $n$ equally preferred pieces.
    	
    	Run SubCore Protocol on the $n$ pieces with agent set $N \backslash \{i\}$ with each agent having a benchmark value as zero.
    	The call gives an allocation to the agents in $N \backslash \{i\}$ such that one of the $n$ pieces is untrimmed and unallocated.
    	
    	Give $i$ one of the unallocated untrimmed pieces from the previous step.
    	
    	\Return envy-free partial allocation (in which each agent gets a connected piece) as well as the unallocated cake.
    	
    	\caption{Core Protocol}
    	\label{Algorithm: core protocol}
    \end{algorithm}
    
    \begin{algorithm}
    	\SetAlgoLined
    	\KwIn{Cake cut into $n''$ pieces (with $ n'' \leq n $) to be allocated among agents in set $\{1, \cdots, n'\} = N' \subseteq N$ with $n' = |N'|$ and a benchmark value $b_j$ for each $ j \in N' $.
    		\{We only call SubCore if the benchmark values are such that there exists an envy-free allocation of agents in $N'$ where each agent gets at most one of the pieces giving him at least the specified benchmark value.\} }
    	\KwOut{A neat envy-free partial allocation for agents in $N'$ in which each agent $j \in N'$ gets a connected piece of value at least $b_j$.}
    	
    	\For{$m = 1$ to $n'$}{
    		\uIf{the piece agent $m$ preferred at the launch of the protocol is still unallocated}{
    			we tentatively give agent $m$ that piece and go to the next iteration of the `for' loop.
    		}
    		\Else{
    			the first $m$ agents are contesting for the same $m - 1$ tentatively allocated pieces. We call them the $contested$ pieces.
    			For each agent $j$, set $b_j'$ to be the maximum of $b_j$ and agent $j$'s value of the most preferred uncontested piece.
    			Then each agent in $[m]$ is asked to place a trim on all contested pieces of high enough value so that the contested piece on the right hand side of her trim is of the same value as $b'_j$.
    			
    			Set $W$ to be the set of agents who trimmed most (had the rightmost trim) in some piece.
    			
    			\While{ $|W| < m - 1$}{
    				Ignore the previous trims of agents in $W$ from now on and forget the previous allocation.
    				
    				Run SubCore Protocol on the contested pieces with $W$ as the target set of agents (with $b'_j$ as their benchmark value input) and for each contested piece, the part to the left side of the rightmost trim by an agent in $[m] \backslash W$ is ignored.
    				\{The result of the recursive call of SubCore is an allocation that gives a (partial) contested piece to each of the agent in $W$.\}
    				
    				Take any unallocated contested piece $a$.
    				The current left margin (beyond which the piece is ignored) is by agent $i \in [m] \backslash W$.
    				\[ W \leftarrow W \cup \{i\}.\]
    				At this point the current allocation of agents in $W$ is tentative and not permanently made.
    				\{An agent from $[m] \backslash W$ has been added to $W$.
    				For the updated $W$, each agent in $W$ gets a partial contested piece and no agent envies an unallocated piece. Recall that for each piece, the left side of the rightmost trim by an agent in $[m] \backslash \ W$ is ignored.\}
    				
    				Update the value $b'_j$ of each agent $j$ in $W$ to equal the value of the piece that they have been tentatively allocated.
    			}
    			
    			Run SubCore on all agents in $W$ and the set of contested pieces, where we ignore the part to the left of the trim make by the agent in $[m] \backslash W$.
    			The benchmark of each $j \in W$ is $b'_j$.
    			
    			The only agent $j$ remaining in $[m] \backslash W$ is tentatively given her most preferred uncontested piece.
    		}
    	}
    	
    	\Return envy-free partial cake for agents in $N'$  as well as the unallocated cake.
    	
    	\caption{SubCore Protocol}
    	\label{Algorithm:SubCore Protocol}
    \end{algorithm}

    \subsection{Analysis of Main Protocol}
    \label{Subsection: analysis of main protocol}

	Combining all the results from previous subsections, we have
	\begin{theorem}
		Algorithm \ref{Algorithm : main protocol} returns a locally proportional allocation on graph $G$ with $O(14^n)$ queries.
	\end{theorem}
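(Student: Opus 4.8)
The plan is to treat Algorithm~\ref{Algorithm : main protocol} purely as the composition of its two subroutines and to reduce both its correctness and its complexity to the results already established. For correctness I would simply chain Theorem~\ref{Theorem: the analysis of the CreateDominance Protocol} and Theorem~\ref{Theorem: the analysis of the DiffuseDominance Protocol}: the call to CreateDominance$(G,R,r)$ returns a (possibly partial) locally proportional allocation in which the center agent $r$ is dominant, and this is exactly the input assumed by DiffuseDominance$(G,R,r)$, whose output is a \emph{complete} locally proportional allocation of the whole cake. Hence the only real content of the theorem lies in verifying the $O(14^n)$ query bound.

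For the complexity I would first add up the costs of the two stages. By Theorem~\ref{Theorem: the analysis of the CreateDominance Protocol} the first stage uses $O\!\left(n\,(6n/k)^k\,T(n)\right)$ queries, where $k$ is the length of the shortest path found and $T(n)$ is the cost of the Core Protocol; by Theorem~\ref{Theorem: the analysis of the DiffuseDominance Protocol} the second stage uses $O(n^2)$ Core calls, i.e.\ $O(n^2 T(n))$ queries. Because $(6n/k)^k$ is already at least $(3n)^2$ for any admissible $k\ge 2$, the first stage dominates, so it suffices to bound $n\,(6n/k)^k\,T(n)$. Lemma~\ref{Lemma: complexity of core protocol} then lets me substitute $T(n)=O(c^n)$ for any constant $c>4$, leaving the factor $(6n/k)^k$ to be controlled.

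The key step is to optimize $(6n/k)^k$ using the fact that $r$ was chosen from the center of $G$. I would argue that the shortest-path length $k$ from $u$ to $r$ is at most the eccentricity of $r$, which equals the radius of $G$, and that the radius of any connected graph on $n$ vertices is at most $\lceil (n-1)/2\rceil \le n/2$ (the path $P_n$ being the extremal tree, and any connected graph having radius no larger than that of a spanning tree). I then examine the monotonicity of $g(k)=k\bigl(\ln(6n)-\ln k\bigr)=\ln\bigl((6n/k)^k\bigr)$; its derivative $\ln(6n/k)-1$ is positive exactly when $k<6n/e\approx 2.2n$, so $(6n/k)^k$ is increasing throughout the admissible range $k\in[2,n/2]$ and is therefore maximized at $k=n/2$, where it equals $12^{n/2}=\sqrt{12}^{\,n}$.

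Combining the pieces gives a total of $O\!\left(n\,(\sqrt{12}\,c)^n\right)$ queries. Since $14/\sqrt{12}\approx 4.04>4$, I can pick a constant $c$ with $4<c<14/\sqrt{12}$, so that $\sqrt{12}\,c<14$; the resulting geometric gap absorbs the leading polynomial factor $n$ and yields $O(14^n)$. The main obstacle---and the reason the seemingly arbitrary constant $14$ appears---is precisely this optimization: a naive path could have length close to $n$, in which case $(6n/k)^k$ would already reach $6^n$ and the product with the $4^n$ Core cost would only give $O(24^n)$. It is the restriction $k\le n/2$, bought by choosing $r$ in the center of $G$, that caps the base at $\sqrt{12}\approx 3.46$ and keeps the product with a base just above $4$ below $14$.
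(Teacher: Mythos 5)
Your proposal is correct and follows essentially the same route as the paper: correctness by chaining the two theorems, noting that CreateDominance dominates the cost, using the center choice to get $k \le n/2$, maximizing $(6n/k)^k$ at $k = n/2$ to obtain $(2\sqrt{3})^n$, and absorbing the polynomial factor into the slack by taking $c$ strictly between $4$ and $14/\sqrt{12} = 7/\sqrt{3}$ (the paper fixes $c = 7/\sqrt{3}$ directly). Your explicit monotonicity check of $k \mapsto (6n/k)^k$ on $[2, n/2]$ is a detail the paper leaves implicit, but the argument is the same.
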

	\begin{proof}
		By Theorem \ref{Theorem: the analysis of the CreateDominance Protocol} and \ref{Theorem: the analysis of the DiffuseDominance Protocol}, the allocation returned by the Main Protocol is locally proportional.
		The CreateDominance Protocol is dominant in the overall complexity and uses $O(n \cdot \left( \frac{6n}{k} \right) ^ k T(n))$ queries by Theorem \ref{Theorem: the analysis of the CreateDominance Protocol}, where $k$ is the length of the shortest path $P$ found in it and $T(n)$ is the number of queries of the Core Protocol.
		Since the Main Protocol selects an agent $r$ from the center of $G$, we have $k \leq n / 2$.
		By setting $c = 7 / \sqrt{3} > 4$ in Lemma \ref{Lemma: complexity of core protocol}, $n T(n) = O((7 / \sqrt{3})^n)$.
		Hence the overall complexity of the Main Protocol is bounded by $O((2 \sqrt{3})^n \cdot (7 / \sqrt{3})^n) = O(14^n)$.
	\end{proof}

    We remark that when the underlying graph $G$ is a path, the CreateDominance Protocol may invoke the Core Protocol roughly $n \cdot 5^{n / 2}$ times in the worst case.
    Thus even if the Core Protocol used polynomial queries, Algorithm \ref{Algorithm : main protocol} is still exponential.

	\section{Towards Envy-freeness on Trees}
	\label{Secton: toward envy-freeness on trees}

	This section briefly discusses locally envy-free allocations on trees.
	Previously, a \emph{continuous} locally envy-free protocol on trees was already known \cite{BeiQZ17-Networked}.
	However, despite of the globally envy-free protocol designed in \cite{AzizM16-Discrete-Any}, a \emph{simple} discrete and bounded one on trees is still unknown.
	As we have seen, the Core Protocol in \cite{AzizM16-Discrete-Any} which obtains an envy-free partial allocation is the cornerstone of several cake cutting protocols.
	In light of this, we present in this section a slightly weaker ``Core Protocol'' (Algorithm \ref{Algorithm: tree core}) which specializes on trees, but with a significantly improved complexity of $O(n^2)$ queries.

	\begin{algorithm}
		\SetAlgoLined
		\KwIn{A rooted tree $T$ with root $r$, $|T| = n$, and the cake $[0, 1]$.}
		\KwOut{A locally envy-free partial allocation on $T$ and a residue $R$.}

		Root $r$ cuts the cake into $n$ equally preferred pieces in her own measure and temporarily takes all these $n$ pieces.

		Initialize the residue $R = \emptyset$.

		\For{each agent $v \in T$  in a BFS order }{

			Agent $v$ have received $|T(v)|$ pieces at the moment, where $T(v)$ denotes the subtree rooted at $v$.
			Among them agent $v$'s least favorite piece is denoted by $p^*$.
			For each piece $p$ she holds, agent $v$ is asked to cut a part of it such that she thinks the remaining part $p'$ and $p^*$ are of the same value.
			Piece $p - p'$ is added into the residue $R$.

			\For{each immediate child $u$ of $v$ in an arbitrary order}{

				Agent $u$ takes $|T(u)|$ pieces she values the highest from the remaining pieces that agent $v$ holds.

			}

			The last piece that agent $v$ currently holds is allocated to her.

		}

		\Return the current allocation and the residue.

		\caption{TreeCore}
		\label{Algorithm: tree core}
	\end{algorithm}

	\begin{theorem}
		Algorithm \ref{Algorithm: tree core} returns a locally envy-free partial allocation on tree $T$ with $2n^2$ queries.
		Indeed, an agent never envies her descendants on $T$.
		Besides, root $r$ of $T$ obtains $1 / n$ of the cake.
	\end{theorem}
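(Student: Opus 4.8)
The plan is to first fix a bookkeeping invariant and then verify the three assertions in turn. First I would show by induction along the BFS order that when agent $v$ is processed she holds exactly $|T(v)|$ pieces: the root holds all $n = |T(r)|$ pieces, and when $v$'s parent was processed (necessarily earlier in the BFS order) agent $v$ took precisely $|T(v)|$ pieces, which then sit untouched in $v$'s hands until her own turn. Since the children of $v$ collectively take $\sum_{u \text{ child of } v} |T(u)| = |T(v)| - 1$ pieces, exactly one is left for $v$ to keep, so every agent ends with a single piece $A_v$. I would also record the monotonicity fact that pieces are only ever trimmed, with the removed parts going into $R$, and never merged; hence each final piece $A_w$ is a subset of exactly one of the pieces that each of $w$'s ancestors held at her turn.

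Because the neighbours of a vertex in a tree are its parent and its children, and children are descendants, local envy-freeness reduces to two claims: (a) $v$ does not envy any descendant, and (b) $v$ does not envy her parent. For (a), I would observe that when $v$ is processed she trims all $|T(v)|$ of her pieces down to the value of her least favourite piece $p^*$, so in her measure $f_v$ every piece she then holds has value exactly $f_v(p^*)$; keeping one of them gives $f_v(A_v) = f_v(p^*)$. Any descendant $w$'s final piece $A_w$ is a subset of one of the remaining equalized pieces, which $v$ also valued at $f_v(p^*)$, and trimming only decreases value, so $f_v(A_w) \le f_v(p^*) = f_v(A_v)$. This is precisely the stated strengthening that an agent never envies her descendants.

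The more delicate claim is (b), and I expect the ordering argument behind it to be the main obstacle. The key is the order of operations during the parent $p$'s turn: $p$ first trims and equalizes her pieces, the children then pick in some order, and only afterwards does $p$ keep the single leftover piece. Consequently, when $v$ selected her $|T(v)|$ favourite pieces, the piece $A_p$ that $p$ eventually keeps was still available and already in its final, trimmed form, yet $v$ chose not to take it. Hence $f_v(q) \ge f_v(A_p)$ for every piece $q$ that $v$ took, so $f_v(p^*) = \min_q f_v(q) \ge f_v(A_p)$, and combining with $f_v(A_v) = f_v(p^*)$ yields $f_v(A_v) \ge f_v(A_p)$. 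This is exactly where the ``pick your favourites before the parent keeps hers'' structure of the protocol is indispensable, and care is needed to confirm that $A_p$ is indeed untaken at $v$'s turn and never modified afterwards.

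For the root's share, I would note that $r$ cuts the cake into $n$ pieces of equal value $1/n$, so her least favourite piece also has value $1/n$, her trimming step removes nothing, and the piece she keeps satisfies $f_r(A_r) = 1/n$. Finally, for the query bound I would sum the per-agent costs over the BFS: the initial division uses $n-1$ cuts, and each agent $v$ spends $O(|T(v)|)$ evaluations to locate $p^*$ and select her pieces together with $O(|T(v)|)$ cuts for trimming. Since $\sum_v |T(v)| \le n^2$, this routine accounting gives the stated $2n^2$ query bound.
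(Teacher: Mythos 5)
Your proposal is correct and follows essentially the same route as the paper's own proof: the counting invariant that each agent holds exactly $|T(v)|$ pieces at her turn, the equalization argument giving $f_v(A_v)=f_v(p^*)$ and hence no envy toward descendants (whose pieces only shrink), the ``children pick their favorites before the parent keeps the leftover'' observation for no envy toward the parent, the root reserving an untrimmed $1/n$ piece, and the $O(n^2)$ query accounting. Your write-up is somewhat more explicit about the bookkeeping (e.g.\ that $A_p$ is already in final form and still available when $v$ picks), but there is no substantive difference in approach.
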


	\begin{proof}
		Step 5 to 8 are well-defined since $|T(v)| = \sum_{u:\mbox{ }child\mbox{ }of\mbox{ }v} |T(u)| + 1$.
		It follows that each agent gets possibly a part of exactly one piece cut by root $r$ after the protocol terminates.
		Then root $r$ obtains $1 / n$ of the cake since she cuts the cake into $n$ equal pieces and reserves one of them.
		Next, we show that no envy happens when agent $v$'s children take pieces from him.
		For a child $u$ of agent $v$, she takes $|T(u)|$ pieces which she values the highest from the remaining pieces that agent $v$ holds when it is agent $u$'s turn.
		Agent $u$ will finally obtain a piece that is of the same value as the least valuable one among those $|T(u)|$ pieces.
		Despite that, it is still better than the piece allocated to agent $v$, which guarantees agent $u$ will not envy agent $v$.
		As for agent $v$, since she cuts all $|T(v)|$ pieces she holds into an equal value before her children take pieces from him, she will not envy her children despite whichever piece she is finally allocated.
		In addition, $v$ will not envy her descendants since the pieces only gets smaller as the protocol proceeds.

		The root needs $(n-1)$ cut queries to obtain $n$ equal pieces.
		Each non-root agent $u$ needs to evaluate at most $|T(v)|$ pieces held by her parent $v$ and selects $|T(u)|$ pieces from them.
		She will make $|T(u)| - 1$ cuts on her selected pieces.
		As a result, each agent needs $|T(v)| + |T(u)| - 1 \leq 2n$ queries and hence the protocol is bounded by $2n^2$.
	\end{proof}

	\section{Conclusion}
	\label{Section: conclusion}

	There remains several open problems on the cake cutting problem under social graphs.
	First of all, it remains unknown whether or not a locally proportional allocation on any undirected graph can be found using polynomial queries.
	Next, does there exist a \emph{simple} locally proportional protocol for any given \emph{directed} graph?
	The DiffuseDominance Protocol still works for directed graphs.
	However, the exchange operation in the CreateDominance Protocol fails, since it requires that agents who are prepared to exchange their pieces are able to see each other.
	Finally, a simple and discrete locally envy-free protocol still lacks even for special classes of graphs.

%
%
%

	\newpage
	\bibliographystyle{plain}
	\bibliography{CakeCutting_arXiv}
\end{document}